\pgfplotsset{compat=newest}
\newcommand\fs@spaceruled{\def\@fs@cfont{\bfseries}\let\@fs@capt\floatc@ruled
\def\@fs@pre{\vspace{.05in}\hrule height.8pt depth0pt \kern2pt}%
\def\@fs@post{\kern2pt\hrule\relax}%
\def\@fs@mid{\kern2pt\hrule\kern2pt}%
\let\@fs@iftopcapt\iftrue}
\newtheorem{proposition}{Proposition}
\tikzset{
	problem/.style={
		rectangle,
		draw,
		rounded corners,
		minimum width=1.8cm,
		minimum height=0.8cm,
		align=center
	},
	arrow/.style={-{Latex}, thick},
	doublearrow/.style={<->, thick}
}
\newacronym{3gpp}{3GPP}{$\text{3}^{\text{rd}}$ generation partnership project}
\newacronym{5g}{5G}{fifth-generation}
\newacronym{6g}{6G}{sixth-generation}
\newacronym{bs}{BS}{base-station}
\newacronym{cdf}{CDF}{cumulative distribution function}
\newacronym{csi}{CSI}{channel state information}
\newacronym{drl}{DRL}{deep-reinforcement learning}
\newacronym{dl}{DL}{downlink}
\newacronym{ecef}{ECEF}{earth-centered, earth-fixed}
\newacronym{efc}{EFC}{Earth-fixed cell}
\newacronym{emc}{EMC}{Earth-moving cell}
\newacronym{fdd}{FDD}{frequency-division duplex}
\newacronym{fspl}{FSPL}{free-space path loss}
\newacronym{gnb}{gNB}{next-generation node-B}
\newacronym{gnss}{GNSS}{global navigation satellite system}
\newacronym{geo}{GEO}{geostationary Earth orbit}
\newacronym{iot}{IoT}{internet-of-things}
\newacronym{leo}{LEO}{low Earth orbit}
\newacronym{los}{LOS}{line-of-sight}
\newacronym{mab}{MAB}{multi-arm bandit}
\newacronym{meo}{MEO}{medium Earth orbit }
\newacronym{mimo}{MIMO}{multiple-input multiple-output}
\newacronym{mrc}{MRC}{maximal-ratio combiner}
\newacronym{mrt}{MRT}{maximal-ratio transmission}
\newacronym{ned}{NED}{North-East-Downward}
\newacronym{neu}{NEU}{North-East-Upward}
\newacronym{ntn}{NTN}{non-terrestrial network}
\newacronym{nlos}{NLOS}{non-line-of-sight}
\newacronym{ngeo}{NGEO}{non-geostationary Earth orbit}
\newacronym{ofdm}{OFDM}{orthogonal-frequency division multiplexing}
\newacronym{ran}{RAN}{radio-access network}
\newacronym{sa}{SA}{simulated annealing}
\newacronym{sinr}{SINR}{signal-to-noise plus interference ratio}
\newacronym{tdd}{TDD}{time-division duplex}
\newacronym{ue}{UE}{user-equipment}
\newacronym{ul}{UL}{uplink}
\newacronym{upa}{UPA}{uniform-planar array}
\definecolor{myblue}{RGB}{0,104,180}
\definecolor{myred}{RGB}{244,161,152}
\definecolor{mydarkred}{RGB}{157,34,70}
\definecolor{mygreen}{RGB}{0,136,120}
\definecolor{mypantone}{RGB}{59,41,106}
\newcommand\copyrighttext{%
	\fontsize{6}{7}\selectfont \textcopyright This work has been submitted to the IEEE for possible publication. Copyright may be transferred without notice, after which this version may no longer be accessible.}
\newcommand\copyrightnotice{%
	\begin{tikzpicture}[remember picture,overlay]
		\node[anchor=south,yshift=20pt] at (current page.south) {\parbox{\textwidth}{\copyrighttext}};
	\end{tikzpicture}%
}
\begin{document}

\title{Dynamic Downlink-Uplink Spectrum Sharing between Terrestrial and Non-Terrestrial Networks}

\author{\IEEEauthorblockN{Sourav Mukherjee\IEEEauthorrefmark{1}, Bho Matthiesen\IEEEauthorrefmark{1}, Armin Dekorsy\IEEEauthorrefmark{1}, Petar Popovski\IEEEauthorrefmark{2}\IEEEauthorrefmark{1}}
	\IEEEauthorblockA{\IEEEauthorrefmark{1}University of Bremen, Department of Communications Engineering, Germany\\\IEEEauthorrefmark{2}Aalborg University, Department of Electronic Systems, Denmark\\ email: \{mukherjee, matthiesen, dekorsy\}@ant.uni-bremen.de, petarp@es.aau.dk}
	\thanks{
		This work is supported by the German Research Foundation (DFG) under Grant EXC 2077 (University Allowance).
	}%
}

\maketitle
\copyrightnotice

	\begin{abstract}
6G networks are expected to integrate low Earth orbit satellites to ensure global connectivity by extending coverage to underserved and remote regions.
However, the deployment of dense mega-constellations introduces severe interference among satellites operating over shared frequency bands.
This is, in part, due to the limited flexibility of conventional frequency division duplex (FDD) systems, where fixed bands for downlink (DL) and uplink (UL) transmissions are employed.
In this work, we propose dynamic re-assignment of FDD bands for improved interference management in dense deployments and evaluate the performance gain of this approach.
To this end, we formulate a joint optimization problem that incorporates dynamic band assignment, user scheduling, and power allocation in both directions.
This non-convex mixed integer problem is solved using a combination of equivalence transforms, alternating optimization, and state-of-the-art industrial-grade mixed integer solvers.
Numerical results demonstrate that the proposed approach of dynamic FDD band assignment significantly enhances system performance over conventional FDD, achieving up to 94\,\% improvement in throughput in dense deployments.
	\end{abstract}
	\begin{IEEEkeywords}
		Spectrum sharing, Dynamic downlink-uplink band, LEO, interference-mitigation.
	\end{IEEEkeywords}

\section{Introduction}
The \gls{ntn} is pivotal for realizing the long-standing vision of global coverage and bridging the digital divide~\cite{Yaacoub2020}. Among the others, \gls{leo} constellations are envisioned as a cornerstone of future technologies, thanks to their lower launching cost, lower path-loss, and delay. Further, due to ambitious projects by Iridium, OneWeb, SpaceX, Amazon, and Europe's IRIS$^2$, the \cgls{leo} will likely become crowded. Forcing multiple satellites in different orbits to share the frequency bands. 

To enable spectrum sharing, the early works focus in between the \gls{geo} and \cgls{ngeo} satellites, comparing traditional methods like look-aside~\cite{Braun2019} among others. However, these methods suffer from high aggregated interference at the receiver~\cite{Jalali2024}. A significant amount of literature also explores spectrum sharing between terrestrial and satellite. Notably, a key technique which uses reverse pairing were proposed for flexible utilization of frequency bands to reduce the interference~\cite{Lee2024}.  In normal pairing, both systems use the same band for \cgls{dl} and another shared band for \cgls{ul}. By contrast, reverse pairing assigns one band to the \cgls{dl} of one system and the \cgls{ul} of the other, and vice-versa for the second band.
More recent studies shifted attention towards \cgls{leo}-\cgls{leo}, by providing various methods to reduce interference~\cite{Al-Hraishawi2023,Takahashi2019}. However, the existing interference mitigation techniques provide limited gain in high-interference scenarios; therefore, an effective strategy is desired, due to emerging dense \cgls{leo} constellations.

In this work, we propose a novel scheduling methodology for satellites operating on common bands, where the two fixed frequency bands can dynamically switch roles between \cgls{dl} and \cgls{ul}. This flexibility introduces an additional degree-of-freedom in scheduling, and improve the overall system performance. The selection of \cgls{dl} and \cgls{ul} frequency bands from the two available options is treated as a variable, which we refer to as the \textit{spin}, and the overall mechanism is termed \textit{spinning bands}. The idea of spin is inspired by~\cite{Popovski2015}, where authors optimized the transmission directions of \cgls{dl} and \cgls{ul} in a \cgls{tdd} system.

\begin{figure} 
    \centering
    {\footnotesize
    \begin{tikzpicture}[yscale=0.7, every path/.style={-latex, >=Latex}] 
          \coordinate (sat3) at (-3.1,2.2);
          \coordinate (sat4) at (-.8, 2.2);
          \coordinate (ue3)  at (-2.5,0.2);
          \coordinate (ue4)  at (-1.5,0.2);
          \coordinate (sat1) at (1.5,2.2);
          \coordinate (sat2) at (3.7, 2.2);
          \coordinate (ue1)  at (2.1,0.2);
          \coordinate (ue2)  at (3.1,0.2);     
         \pgfmathsetlengthmacro{\delta}{10pt}


        \node (s1) at (sat1) {\includegraphics[width=.6cm, angle=-100]{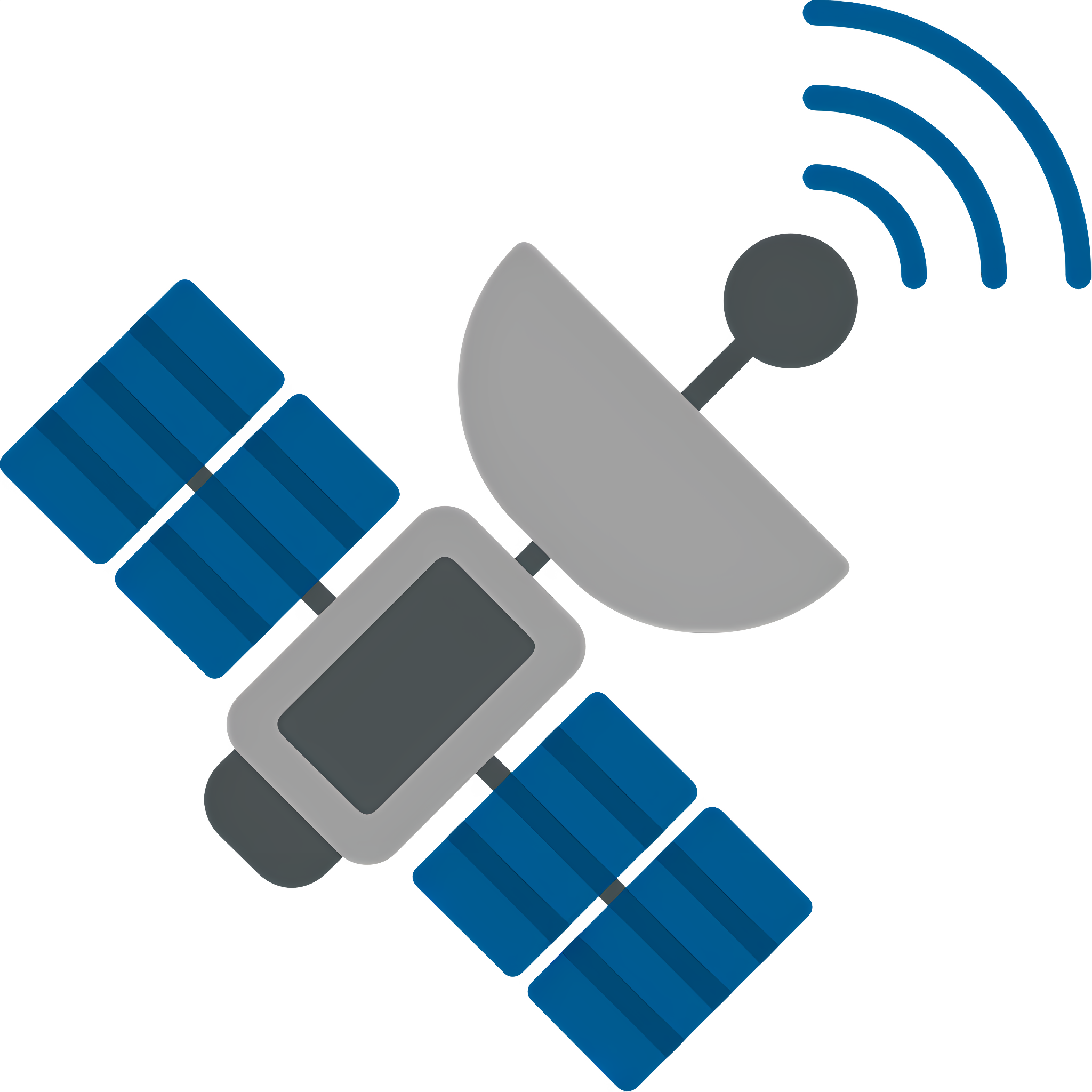}};
        \node (s2) at (sat2) {\includegraphics[width=.6cm, angle=-170]{Fig/sat.png}};
        \node (u1) at (ue1) {\includegraphics[width=0.7cm]{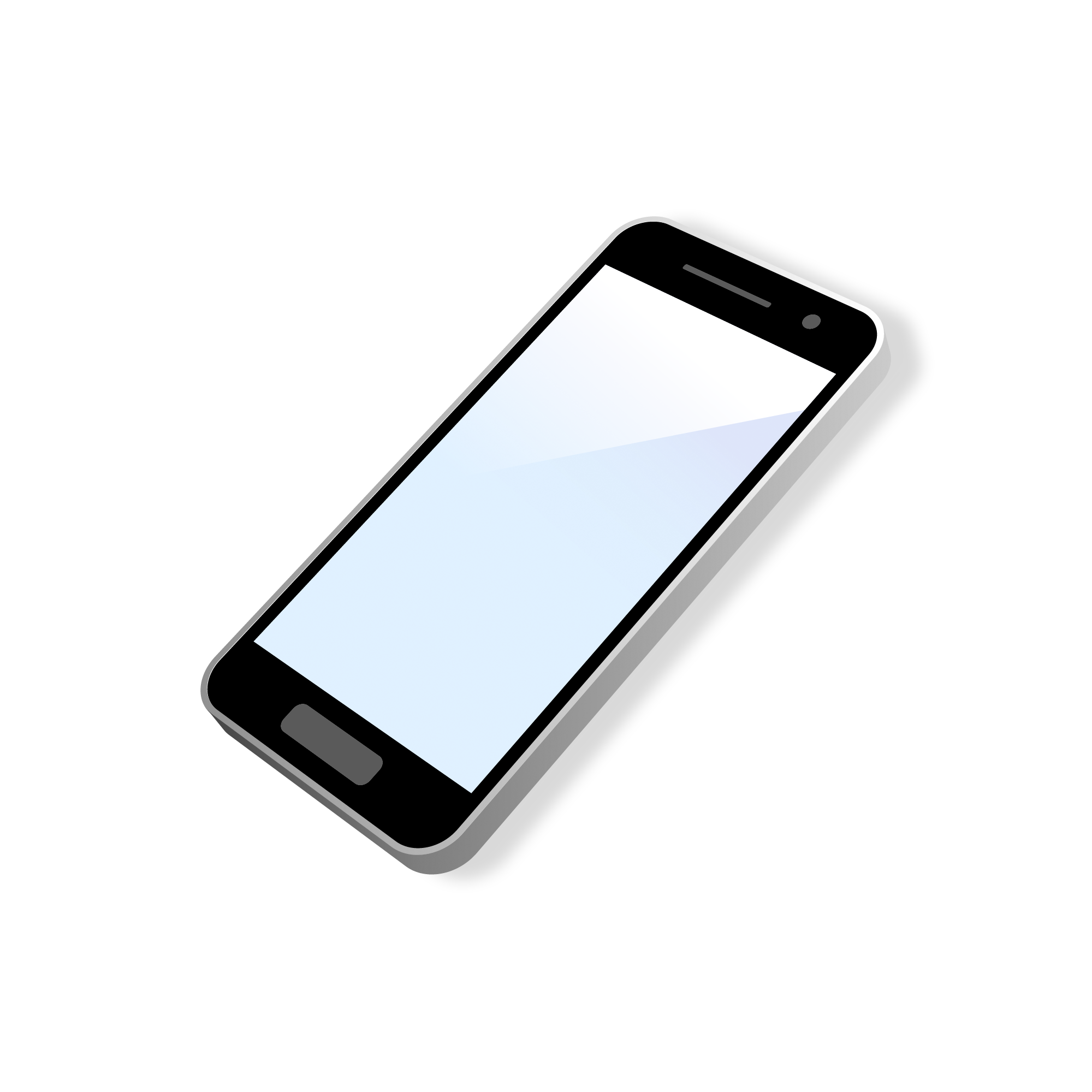}};
        \node (u2) at (ue2) {\includegraphics[width=0.7cm]{Fig/phn.png}};

        \coordinate (sat1_anchor_dl) at ($(sat1) + (0.1,-0.35)$);
        \coordinate (sat1_anchor_ul) at ($(sat1) + (0.2,-0.2)$);
        \coordinate (sat2_anchor_dl) at ($(sat2) + (-0.25,-0.2)$);
        \coordinate (sat2_anchor_ul) at ($(sat2) + (-0.1,-0.3)$);
        \coordinate (ue1_anchor_dl) at ($(ue1) + (0,0)$);
        \coordinate (ue1_anchor_ul) at ($(ue1) + (0.1,0.15)$);
        \coordinate (ue2_anchor_dl) at ($(ue2) + (-0.05,0.12)$);
        \coordinate (ue2_anchor_ul) at ($(ue2) + (0.1,0.05)$);

        \node[above=-0.35cm] at ($(ue1_anchor_dl)!0.5!(ue2_anchor_ul)$) {\includegraphics[width=.55cm, height=.9cm]{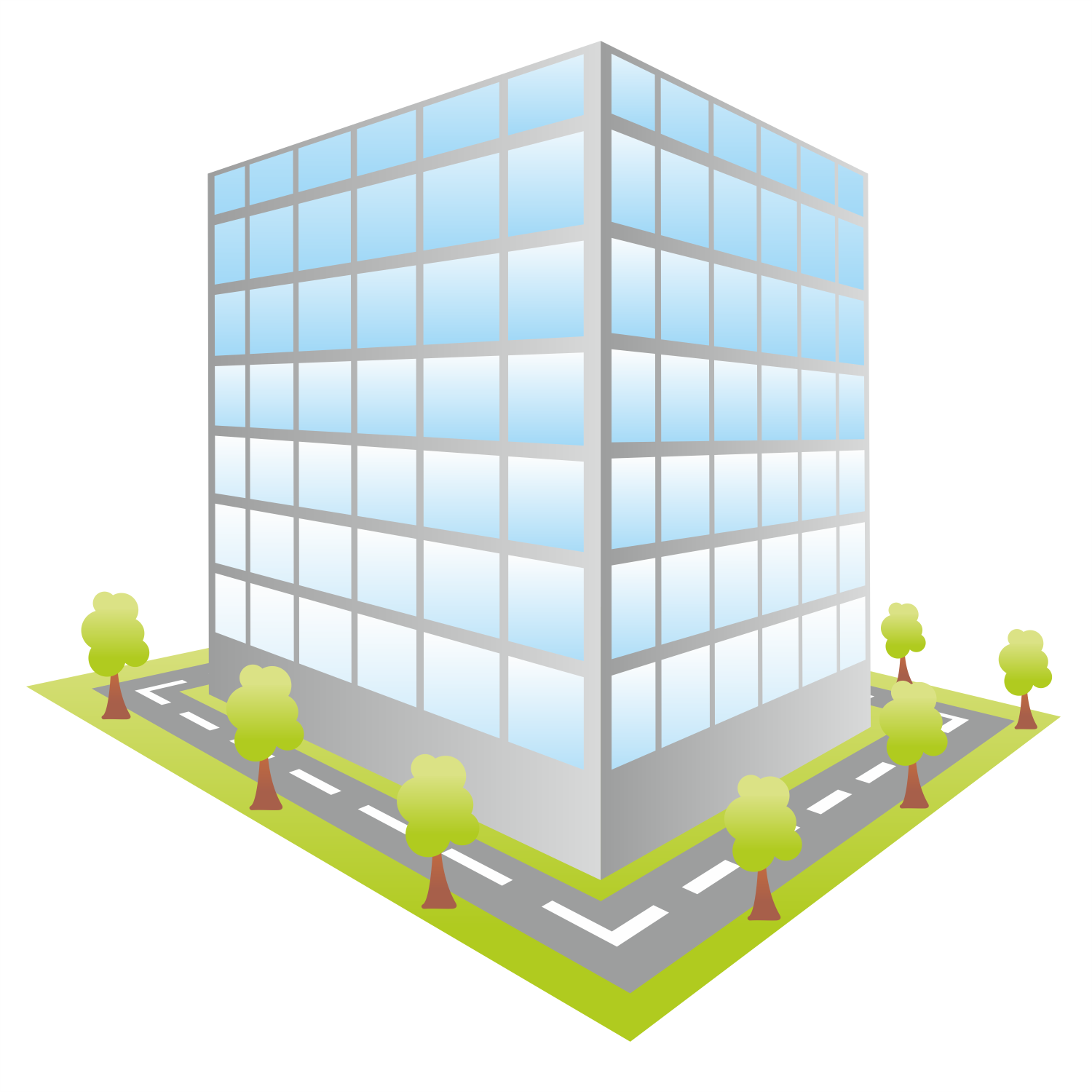}};
        \draw[->,color=myblue, thick, line cap=round]  (sat1_anchor_dl) -- (ue1_anchor_dl);
        \draw[->,color=red!70, thick, line cap=round]  (ue1_anchor_ul) -- (sat1_anchor_ul);
        \draw[->, color=red!70, thick, line cap=round]  (sat2_anchor_dl) -- (ue2_anchor_dl);
        \draw[->,color=myblue, thick, line cap=round]  (ue2_anchor_ul) -- (sat2_anchor_ul);       

        \draw[->,dashed, color=myblue!50, thick, line cap=round] (sat1_anchor_dl) -- (sat2_anchor_ul);
        \draw[->,dashed, color=red!40, thick, line cap=round] (sat2_anchor_dl) -- (sat1_anchor_ul);       
        \draw[->,dashed, color=myblue, thick, line cap=round] (ue2_anchor_ul) -- (ue1_anchor_dl);
        \draw[->,dashed, color=red!70, thick, line cap=round] (ue1_anchor_ul) -- (ue2_anchor_dl);     

        \node (s3) at (sat3) {\includegraphics[width=.6cm, angle=-100]{Fig/sat.png}};
        \node (s4) at (sat4) {\includegraphics[width=.6cm, angle=-170]{Fig/sat.png}};
        \node (u3) at (ue3) {\includegraphics[width=0.7cm]{Fig/phn.png}};
        \node (u4) at (ue4) {\includegraphics[width=0.7cm]{Fig/phn.png}};

        \coordinate (sat3_anchor_dl) at ($(sat3) + (0.1,-0.35)$);
        \coordinate (sat3_anchor_ul) at ($(sat3) + (0.2,-0.2)$);
        \coordinate (sat4_anchor_dl) at ($(sat4) + (-0.25,-0.2)$);
        \coordinate (sat4_anchor_ul) at ($(sat4) + (-0.1,-0.3)$);
        \coordinate (ue3_anchor_dl) at ($(ue3) + (-.05,0.15)$);
        \coordinate (ue3_anchor_ul) at ($(ue3) + (0.1,0.15)$);
        \coordinate (ue4_anchor_dl) at ($(ue4) + (-0.05,0.12)$);
        \coordinate (ue4_anchor_ul) at ($(ue4) + (0.1,0.18)$);

        \draw[->,color=myblue, thick, line cap=round]  (sat3_anchor_dl) -- (ue3_anchor_dl);
        \draw[->,color=red!70, thick, line cap=round]  (ue3_anchor_ul) -- (sat3_anchor_ul);
        \draw[->,color=myblue, thick, line cap=round]  (sat4_anchor_dl) -- (ue4_anchor_dl);
        \draw[->,color=red!70, thick, line cap=round]  (ue4_anchor_ul) -- (sat4_anchor_ul);       

        \draw[->, dashed, color=myblue, thick, line cap=round] (sat3_anchor_dl) -- (ue4_anchor_dl);
        \draw[->, dashed, color=myblue, thick, line cap=round] (sat4_anchor_dl) -- (ue3_anchor_dl);       
        \draw[ ->,dashed, color=red!70, thick, line cap=round] (ue4_anchor_ul) -- (sat3_anchor_ul);
        \draw[->, dashed, color=red!70, thick, line cap=round] (ue3_anchor_ul) -- (sat4_anchor_ul);     

        \draw[line width=0.5pt, line cap=round] ($(sat1_anchor_dl)!0.5!(sat2_anchor_ul)$) ellipse (.08cm and 0.27cm);
        \node[above=0.2cm] at ($(sat1_anchor_dl)!0.5!(sat2_anchor_ul)$) {\shortstack{low \\ interference}};

		\draw[line width=0.5pt, line cap=round] 
		($ (ue3_anchor_dl)!0.5!(ue4_anchor_ul) + (0,0.7) $) 
		ellipse (.08cm and 1cm);
		
		\node[above=.7cm] 
		at ($ (ue3_anchor_dl)!0.5!(ue4_anchor_ul) + (0,0.7) $) 
		{\shortstack{ high \\ interference}};
        \draw[-, thick, color=red!70, line cap=round] ($(ue2_anchor_dl)!0.5!(ue3_anchor_dl) + (-25pt, 10pt)$) -- ($(ue2_anchor_dl)!0.5!(ue3_anchor_dl) + (-5pt, 10pt)$) node[right] {band 1};
        \draw[-, thick, color=myblue, line cap=round] ($(ue2_anchor_dl)!0.5!(ue3_anchor_dl) + (-25pt, 1pt)$) -- ($(ue2_anchor_dl)!0.5!(ue3_anchor_dl) + (-5pt, 1pt)$) node[right] {band 2};

        \node[below] at ($(ue1)!0.5!(ue2) + (0, -\delta)$) {(b) with \cgls{dl}-\cgls{ul} band spin};
        \node[below] at ($(ue3)!0.5!(ue4) + (0, -\delta)$) {(a) without \cgls{dl}-\cgls{ul} band spin};
    \end{tikzpicture}
    }
  \caption{Motivation for dynamic downlink–uplink (DL–UL) band selection, shown for two satellites: (a) Fixed-band allocation, each satellite uses predetermined DL and UL bands, leading to strong interference. (b) Dynamic band selection, satellites flexibly assign bands for DL or UL, reducing inter-satellite interference through spatial separation and directive antennas.}
    \label{fig:1}
\end{figure}



To appreciate the advantages of dynamic \cgls{dl}-\cgls{ul} band selection in the satellite systems, consider an example illustrated in Fig.~\ref{fig:1}. In a conventional setup, Fig.~\ref{fig:1}(a), each satellite communicates with its own \cgls{ue} using fixed frequency bands for \cgls{dl} and \cgls{ul}. When these bands are reused across satellites, interference arises. Now consider the alternative in Fig.~\ref{fig:1}(b), where the satellites are granted the flexibility to dynamically assign the direction of transmission for each band, and assume the satellites use different bands for \cgls{dl}, correspondingly for \cgls{ul}. In this case, the dominant source of interference shifts to satellite-to-satellite and \cgls{ue}-to-\cgls{ue} links. Since, the satellites are generally separated by a large distances and equipped with highly directive antennas, this reduces the satellite-satellite interference. Therefore, the only source of interference is \cgls{ue}-\cgls{ue}. As the \cgls{ue}-\cgls{ue} channel depending on the environment maybe heavily attenuated, and also modern \cglspl{ue} are equipped with multiple patch antennas and capable of \cgls{ul} beamforming, thereby choosing opposite bands may become an essential tool for increasing the throughput. However, for more than two satellites, the decision becomes more challenging and even with high \cgls{ue}-\cgls{ue} interference, depending on the \cgls{ue} distributions, assigning different \cgls{dl} and \cgls{ul} bands to different systems may lead to increase in throughput. This motivates us to formulate a joint optimization problem that incorporates dynamic band assignment, user scheduling, and power allocation. 

 The rest of the paper is organized as follows: the system model, then the optimization problem involving spin, power, and scheduling is presented in Section~\ref{sec:2}. The solution approach to the optimization is discussed in Section~\ref{sec:3}, numerical results are presented in Section~\ref{sec:4}, and finally conclusions are drawn in  Section~\ref{sec:5}.

	\section{System Model \& Problem Formulation}
	\label{sec:2}

	Consider a group comprising $J$ \cgls{leo} satellites, each equipped with $N$ antennas, serving a region with a total of $K$ single-antenna \cglspl{ue}, as illustrated in Fig.~\ref{fig:2}. In this work, the problem formulation is done over a single snapshot, during which the satellite positions and resource allocations are assumed to remain fixed. The satellites and \cglspl{ue} are indexed by the sets $\mathcal{J} = \{1, \cdots, J\}$ and $\mathcal{K} = \{1, \cdots, K\}$, respectively. Two frequency bands are available for \cgls{dl} and \cgls{ul} communications, where band $l$, $l \in \{1, 2\}$, has center frequency $f_l$ and bandwidth $B_l$. Satellite $j \in \mathcal{J}$ serves a subset of the \cglspl{ue} on these bands. Each \cgls{ue} maintains separate connections for \cgls{dl} and \cgls{ul}, either to the same or to separate satellites.
	\cgls{ue} association to satellite~$j$ in the \cgls{dl} is indicated by a binary variable $d_{kj} \in \{0,1\}$, where $d_{kj} = 1$ means \cgls{ue} $k$ is served in \cgls{dl} by satellite~$j$, and $d_{kj} = 0$ otherwise. Similarly, \cgls{ul} association is indicated by a binary variable $u_{kj} \in \{0,1\}$, where $u_{kj} = 1$ means \cgls{ue} $k$ is served in \cgls{ul} by the satellite~$j$, and $u_{kj} = 0$ otherwise. The band assignment for satellite $j$ is represented by spin binary variable $r_j \in \{0,1\}$. If $r_j = 1$, band $l = 1$ is used for \cgls{dl} and band $l = 2$ for \cgls{ul}; if $r_j = 0$, the roles of the bands are reversed, as illustrated in Fig.~\ref{fig:3}. This is, essentially, a \cgls{fdd} system at the satellite $j$ with improved interference management capabilities through flexible band assignment. Still, each transceiver has to use orthogonal frequency bands for \cgls{ul} and \cgls{dl} transmissions. At \cgls{ue} $k$, this can be enforced by the condition $
 \sum_{j \in \mathcal{J}} d_{kj} r_{j} =  \sum_{j \in \mathcal{J}} u_{kj} r_{j}$. This, however, is only necessary if the \cgls{ue} is simultaneously served in \cgls{dl} and \cgls{ul}. Thus,
\begin{multline}
\sum_{j \in \mathcal{J}} d_{kj} = 1 \quad \mathrm{and}\quad \sum_{j \in \mathcal{J}} u_{kj} = 1
\\\implies
\sum_{j \in \mathcal{J}} d_{kj} r_{j} = \sum_{j \in \mathcal{J}}  u_{kj} r_{j}.
\label{eq:1}
\end{multline}
For all other cases, this condition is not relevant.

\begin{figure}[t]
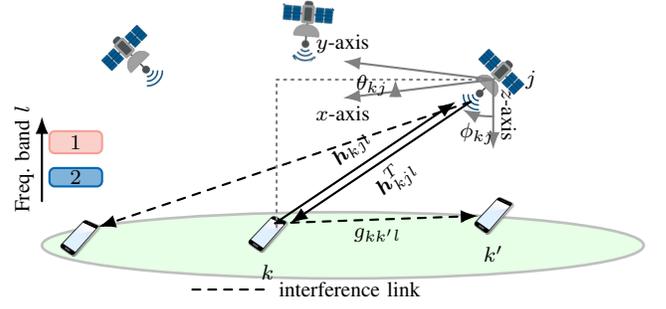

	\centering
	{\footnotesize
		\begin{tikzpicture}[every path/.style={>=Latex}, yscale=0.7] 
			\filldraw[fill=green!10, draw=gray!50, line width=1pt] (0,0) ellipse (4cm and .63cm);
			\draw[-,dashed, thick, line cap=round] (-2,-0.84) -- (-1,-0.84); 
			\node[anchor=west] at (-.95,-0.84) {interference link};
			
			\node (sat1) at (2,3.15) {\includegraphics[width=.8cm, angle=-180]{Fig/sat.png}}; 
			\node (sat2) at (-2.8,3.6) {\includegraphics[width=.8cm, angle=-90]{Fig/sat.png}};
			\node (sat3) at (-.5,4.2) {\includegraphics[width=.8cm, angle=-140]{Fig/sat.png}};
			\node at (sat1.east)  {$j$};
			
			\node (ue1) at (-1,0.21) {\includegraphics[width=0.8cm]{Fig/phn.png}}; 
			\node (ue2) at (2,0.56)   {\includegraphics[width=0.8cm]{Fig/phn.png}};
			\node (ue3) at (-3.5,0.14){\includegraphics[width=0.8cm]{Fig/phn.png}};
			\node at (ue1.south)  {$k$};
			\node at (ue2.south)  {$k'$};
			
			\draw[<-,thick, line cap=round] (1.5,2.73) -- (-0.9,0.42)  node[midway, above, sloped] {$\bm h_{kjl}$};
			\draw[->, thick, line cap=round] (1.7,2.73) -- (-0.7,0.42)  node[midway, below, sloped] {$\bm h_{kjl}^T$};    
			\draw[->, thick, dashed, line cap=round] (1.7,2.73) -- (-3.25,0.35);
			\draw[->, thick, dashed, line cap=round] (-0.9,0.42) -- (1.8,0.56) node[midway, below, sloped] {$g_{kk'l}$};
			
			\draw[->, thick, line cap=round] (-4,0.84) -- (-4,2.45) node[midway, sloped, above] {Freq. band $l$};
			\filldraw[fill=myred!50, draw=myred, thick, rounded corners=2pt] (-3.9,1.75) rectangle (-3.2,2.17);
			\node at (-3.55,1.96) {\textcolor{black}{$1$}};
			\filldraw[fill=myblue!50, draw=myblue, thick, rounded corners=2pt] (-3.9,1.12) rectangle (-3.2,1.47);
			\node at (-3.55,1.295) {\textcolor{black}{$2$}};
			
			\draw[-, thick, dotted, draw=gray, line cap=round] (-0.88,0.35) -- (-.88, 3.15); 
			\draw[-, thick, dotted, draw=gray, line cap=round] (-.88,3.15) -- (2, 3.15); 
			\draw[->, thick, draw=gray, line cap=round]  (2, 3.15) -- (0, 2.8) node[left, sloped, below] {$x$-axis}; 
			\draw[->, thick, draw=gray, line cap=round]  (2, 3.15) -- (0, 3.5) node [left, sloped, above] {$y$-axis}; 
			\draw[->, thick, draw=gray, line cap=round]  (2, 3.15) -- (2, 1.82) node [midway, sloped, above] {$z$-axis}; 
			\draw[<-, thick, draw=gray, line cap=round] (0.7, 3.15) arc[start angle=180, end angle=198, radius=0.7cm] node[midway, left] {$\theta_{kj}$};
			\draw[->, thick, draw=gray, line cap=round] (2, 2.45) arc[start angle=270, end angle=234, radius=0.7cm] node[midway, below] {$\phi_{kj}$};
			
		\end{tikzpicture}
	}
	\caption{A system of $J$ \cgls{leo} satellites in orbit serving $K$ \cglspl{ue} over region.}
	\label{fig:2}
\end{figure}
\begin{figure}[t]
	\centering
	{\footnotesize
		\begin{tikzpicture}[yscale=0.5, every path/.style={>=Latex}] 
						
			\draw[->, thick] (1,0) -- (6.5,0) node[right]{Frequency};
			
			\filldraw[fill=myblue!50, draw=myblue, thick, rounded corners=2pt] (2,0.2) rectangle (3.5,0.7) node[midway]{{UL}};
			\filldraw[fill=myred!50, draw=myred, thick, rounded corners=2pt] (4,0.2) rectangle (5.5,0.7) node[midway]{{DL}};
			
			\filldraw[fill=myblue!50, draw=myblue, thick, rounded corners=2pt] (2,0.9) rectangle (3.5,1.4) node[midway]{{DL}};
			\filldraw[fill=myred!50, draw=myred, thick, rounded corners=2pt] (4,0.9) rectangle (5.5,1.4) node[midway]{{UL}};
			
			\node at (1,0.45) {$r_j = 1$};
			\node at (1,1.15) {$r_j = 0$};
			
			\draw[thick] (2.75,0) -- (2.75,0.2);
			\draw[thick] (4.75,0) -- (4.75,0.2);
			
			\node[below] at (2.75,0) {$f_2$};
			\node[below] at (4.75,0) {$f_1$};
			
		\end{tikzpicture}
	}
	\caption{Introduced spin variable for a satellite $j$.}
	\label{fig:3}
\end{figure}
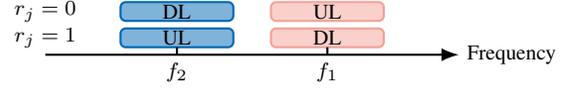

	We denote the band-$l$ channel from satellite $j$ to \cgls{ue}~$k$ as $\bm h_{kjl} \in \mathbb{C}^{N \times 1}$. This channel has a dominant \cgls{los} component with only small non-\cgls{los} effects in the ground-segment. The \cglspl{ue} are assumed to know their own position and those of all satellites. In particular, let $m_{kj}$, $\theta_{kj}$, and $\phi_{kj}$ be the distance, azimuth, and elevation angle from satellite $j$ to \cgls{ue} $k$. Then, the \cgls{los} component of $\bm h_{kjl}$ is $\sqrt{\beta_{kjl}}\, \bm b_l(\theta_{kj}, \phi_{kj})$ with path loss $\beta_{kjl} = ( {c}/{4 \pi m_{kj} f_l})^2$ and array-response vector as
	\begin{equation}
		\bm b_l = 
		\Big(\Big[ \exp\Big(-\frac{j 2\pi f_l}{c}( D^x_n \psi^x + D^y_n \psi^y)\Big) \Big]_{n=1}^N\Big)^T,
		\label{eq:2}
	\end{equation}
	where $\psi^x = \cos{( \phi_{kj} )}\cos{(\theta_{{kj}})}, \psi^y = \cos{( \phi_{kj} )}\sin{(\theta_{kj})}$, and $
	D^x_n, D_n^y$ denote the $x$ and $y$-coordinate, respectively of $n$-th antenna element. We model $\bm h_{kjl}$ as purely \cgls{los} but note that statistical \cgls{csi} is straightforward to incorporate into $\beta_{kjl}$. Further, we assume negligible interference between satellites due to the high directivity and downward positioning of satellite antenna array. However, we need to take inter-\cgls{ue} interference into account. Since accurate and instantaneous inter-\cgls{ue} \cgls{csi} is infeasible to obtain, we model the band-$l$ channel $g_{kk'l} = g_{k'kl}$ from \cgls{ue} $k$ to $k'$ based on position-dependent long-term statistics.

Now, consider that \cgls{ue}~$k$ is served by satellite~$j$ in the \cgls{dl}, indicated by the binary variable $d_{kj} = 1$. The effective \cgls{dl} interference channel from satellite~$j' \in \mathcal{J}$ to $k$, while serving another \cgls{ue}~$k' \in \mathcal{K}_{-k}$ in the \cgls{dl} (i.e., $d_{k'j'}=1$), is given by
\begin{align}
   (\bm \gamma^{\mathrm{dl}}_{kjj'} )^T= \overline{\lvert r_j - r_{j'} \rvert} \,  [ r_{j'} \, \bm h_{kj',l=1}^T  + \overline{r_{j'}} \, \bm h_{kj',l=2}^T ], 
   \label{eq:3}
\end{align}
where $r_j$ and $r_{j'}$ denote the spins of satellites~$j$ and~$j'$, and $\lvert r_j - r_{j'} \rvert$ captures the relative spin between them. This interference arises when both satellites operate with the same spin.  Here, $\overline{a}$ denotes the binary complement of $a \in \{0,1\}$, and for a set $\mathcal{A}$, $\mathcal{A}_{-i} = \mathcal{A} \setminus \{i\}$ for $i \in \mathcal{A}$. Note that $\bm \gamma^{\mathrm{dl}}_{kjj'}$ also represents the direct \cgls{dl} channel between satellite~$j$ and \cgls{ue}~$k$ when $j' = j$, since $\overline{\lvert r_j - r_j\rvert} = 1$. Similarly, if \cgls{ue}~$k$ is served by satellite~$j$ in the \cgls{ul}, indicated by $u_{kj} = 1$, then the effective \cgls{ul} interference channel from \cgls{ue}~$k' \in \mathcal{K}$ to satellite~$j$, where $k'$ is served by satellite~$j' \in \mathcal{J}$ in the \cgls{ul} (i.e., $u_{k'j'} = 1$), is
\begin{align}
   \bm \gamma^{\mathrm{ul}}_{k'jj'} = \overline{\lvert r_j - r_{j'} \rvert} \, [ r_{j'} \, \bm h_{k'j,l=2} + \overline{r_{j'}} \, \bm h_{k'j,l=1} ].
   \label{eq:4}
\end{align}
Note that, $\bm \gamma^{\mathrm{ul}}_{k'jj'}$ also captures the direct \cgls{ul} channel from \cgls{ue}~$k$ to satellite~$j$ when $k' = k$ and $j' = j$; and this interference arises when both satellites operate with the same spin. 
Additionally, a \cgls{ue} may cause interference to another \cgls{ue} during \cgls{ul} transmission. Suppose \cgls{ue}~$k' \in \mathcal{K}_{-k}$ is served in the \cgls{ul} by satellite~$j' \in \mathcal{J}_{-j}$ (i.e., $u_{k'j'} = 1$), while \cgls{ue}~$k$ is served in the \cgls{dl} by satellite~$j$ (i.e., $d_{kj} = 1$). Then, the effective interference channel from \cgls{ue}~$k'$ to \cgls{ue}~$k$ is given by
\begin{align}
   \nu_{kk'jj'} = \lvert r_j - r_{j'} \rvert \, [ r_{j'} \, g_{k'k,l=1} + \overline{r_{j'}} \, g_{k'k,l=2} ],
   \label{eq:5}
\end{align}
where $\nu_{kk'jj'} = \nu_{k'kjj'}$ due to reciprocity. This interference only arises if the spins of the associated satellites $j$ and $j'$ are opposite.

	The received signals at \cgls{ue}~$k$ in the \cgls{dl} and at satellite~$j$ in the \cgls{ul}, corresponding to link~$kj$, are denoted by $y_{kj}^{\mathrm{dl}}$ and $y_{kj}^{\mathrm{ul}}$, respectively, and can be expressed as
    \begin{subequations}\label{eq:6}
    \begin{align}
		y_{kj}^{\text{dl}} =&\, d_{kj} \, \sqrt{p_{k}^{\text{dl}}} \, \lVert \bm \gamma^{\mathrm{dl}}_{kjj} \rVert \, s^{\text{dl}}_{k}  \nonumber\\
		&+ \sum_{j' \in \mathcal{J}} \sum_{k' \in \mathcal{K}_{-k}} d_{k'j'} \, \sqrt{p_{k'}^{\text{dl}}} \, ( \bm \gamma^{\mathrm{dl}}_{kjj'} )^T  \bm w_{k'j'} \, s^{\text{dl}}_{k'} \nonumber\\
		&+ \sum_{j' \in \mathcal{J}_{-j}} \sum_{k' \in \mathcal{K}_{-k}} u_{k'j'} \, \sqrt{p_{k'}^{\text{ul}}} \, \nu_{kk'jj'} \, s^{\text{ul}}_{k'} + n^{\mathrm{dl}}_{kj},
		\label{eq:6a} \\
		y^{\text{ul}}_{kj} =&\, u_{kj} \sqrt{p_{k}^{\text{ul}}} \, \lVert \bm \gamma^{\mathrm{ul}}_{kjj} \rVert \, s^{\text{ul}}_{k} \nonumber \\
		&+ \sum_{j' \in \mathcal{J}} \sum_{k' \in \mathcal{K}_{-k}} u_{k'j'} \, \sqrt{p_{k'}^{\text{ul}}} \, \bm v_{kj}^T \bm \gamma^{\mathrm{ul}}_{k'jj'} \, s^{\text{ul}}_{k'} + n^{\mathrm{ul}}_{kj},
		\label{eq:6b}
	\end{align}
    \end{subequations}
	where $\bm w_{k'j'} \in \mathbb{C}^{N \times 1}$ denotes the \cgls{dl} \gls{mrt} precoder for link $k'j'$, and the vector $\bm v_{kj} \in \mathbb{C}^{N \times 1}$ is the \gls{mrc} receiver at satellite~$j$ for \cgls{ue}~$k$ in the \cgls{ul}.
	Additionally, $s_{k'}^{\mathrm{dl}}$ denotes the unit-energy symbol transmitted in the \cgls{dl} towards \cgls{ue}~$k'$, whereas $s_{k'}^{\mathrm{ul}}$ represents the unit-energy symbol transmitted by \cgls{ue}~$k'$ in the \cgls{ul}. The receiver noise terms $n_{kj}^{\mathrm{dl}}$ for the \cgls{dl} and $n_{kj}^{\mathrm{ul}}$ for the \cgls{ul} are modeled as complex Gaussian with mean $0$ and variance $\sigma^2$. The corresponding transmit powers are denoted by $p_{k'}^{\mathrm{dl}}$ for the \cgls{dl}, i.e., from satellite to \cgls{ue}~$k'$, and by $p_{k'}^{\mathrm{ul}}$ for the \cgls{ul}, i.e., from \cgls{ue}~$k'$ to the satellite.

	\subsection{Formulation of the Optimization Problem}
	\label{sec:3a}
    
		\begin{figure*}[!b]
		\hrulefill 
		\vspace{0.1cm}
		\begin{minipage}{\textwidth}
			\begin{align}
				f_0&(\bm d, \bm r, \bm u, \bm p^{\text{dl}}, \bm p^{\text{ul}}) =   \sum_{k \in \mathcal{K}} \sum_{j \in\mathcal{J}} \Big[ \log_2 (1 +  u_{kj} \, p_{k}^{\text{ul}}  \lVert \bm \gamma^{\mathrm{ul}}_{kjj} \rVert^2 	 /  ( \sigma^2 +  \sum_{j' \in \mathcal{J}} \sum_{k' \in \mathcal{K}_{-k}} u_{k'j'} \, p_{k'}^{\text{ul}} \, \lvert \bm v_{kj}^T \bm \gamma^{\mathrm{ul}}_{k'jj'}\rvert^2   ) )	\nonumber\\
				&+ \log_2 ( 1 + d_{kj}\, p_{k}^{\text{dl}} \, \lVert \bm \gamma^{\mathrm{dl}}_{kjj}  \rVert^2/ ( \sigma^2 +  \sum_{j' \in \mathcal{J}} \sum_{k' \in \mathcal{K}_{-k}} d_{k'j'}\, p_{k'}^{\text{dl}} \, \lvert ( \bm \gamma^{\mathrm{dl}}_{kjj'} )^T  \bm w_{k'j'}\rvert^2    + \sum_{j' \in \mathcal{J}_{-j} }\sum_{k' \in \mathcal{K}_{-k}}  u_{k'j'} \, p_{k'}^{\text{ul}}\, \lvert \nu_{kk'jj'}\rvert^2) ) \Big]. 	\label{eq:7}
			\end{align} 
		\end{minipage}
	\end{figure*}
    
	We formulate a joint optimization problem considering user association, power control, and spins. The objective is the sum of both \cgls{dl} and \cgls{ul} spectral efficiency for all possible user-satellite pairs, and the mathematical program is written as
	\begin{subequations} \label{eq:8} %
		\begin{align}
			(\mathrm{P0): }&\underset{\bm d, \bm r, \bm u, \bm p^{\text{dl}}, \bm p^{\text{ul}}}{\text{max}}	\, \, \, \, f_0(\bm d, \bm r, \bm u, \bm p^{\text{dl}}, \bm p^{\text{ul}}) \nonumber\\
			\text{s.t. \hspace{1pt}	}
			& \text{FDD condition in }\eqref{eq:1}, \text{ for all } k  \label{eq:8b} \\
			& \sum_{j \in \mathcal{J}} d_{kj} \leq 1, \sum_{j \in \mathcal{J}} u_{kj} \leq 1, \text{ for all } k\label{eq:8c}\\
			& \sum_{k \in \mathcal{K}} d_{kj} \,p_{k}^{\text{dl}}\leq p_{j}^{\text{max}}, \text{ for all } j\label{eq:8d} \\
			& p_{k}^{\text{ul}} \leq p_{k}^{\text{max}}, \text{ for all } k\label{eq:8e} \\
			& p_{k}^{\text{dl}} \geq 0, \, p_{k}^{\text{ul}} \geq 0, \text{ for all } k \label{eq:8f}\\
                      & d_{kj}, r_{j}, u_{kj} \in \{0,1\}, \text{ for all } k \text{ and } j. \label{eq:8g}
		\end{align}
	\end{subequations}
    Here, $\bm d = [d_{kj}]_{k=1, j=1}^{k= K, j = J}, \bm u = [u_{kj}]_{k=1, j=1}^{k= K, j = J}, \bm r = [r_j]_{j=1}^J, \bm p^{\text{dl}} = [p_{k}^{\text{dl}}]_{k=1}^K$, and $\bm p^{\text{ul}} = [p_{k}^{\text{ul}}]_{k=1}^K$. The objective $f_0$ is defined in \eqref{eq:7}. Constraint~\eqref{eq:8c} restricts each \cgls{ue} to connect with at most one satellite in \cgls{dl} and at most one in \cgls{ul}.  
    Constraint~\eqref{eq:8d} limits the total \cgls{dl} transmit power of each satellite~$j$ to its maximum budget $p_j^{\text{max}}$, while \eqref{eq:8e} ensures that each \cgls{ue}~$k$ does not exceed its own \cgls{ul} transmit power budget $p_k^{\text{max}}$.  
    Constraint~\eqref{eq:8f} enforces non-negativity of all power variables, and \eqref{eq:8g} implies that the scheduling and spin variables are binary.

\section{Solution to $(\mathrm{P0})$}
\label{sec:3}

    \begin{figure*}[b]
        \hrulefill 
		\vspace{0.1cm}
		\begin{minipage}{\textwidth}
        \begin{align}
            	f_1&(\bm d, \bm r, \bm u, \bm p^{\text{dl}}, \bm p^{\text{ul}}, \,\bm \chi^{\text{dl}} , \bm \chi^{\text{ul}}  ) = \nonumber \\
				&  \sum_{k \in \mathcal{K}} \sum_{j\in \mathcal{J}}  \Big[ \log_2 (1+  	\chi^{\text{dl}}_{kj} )	+ \log_2 ( 1 + \chi^{\text{ul}}_{kj} ) - \chi^{\text{dl}}_{kj}  - \chi^{\text{ul}}_{kj}  + (1+  \chi^{\text{ul}}_{kj} ) u_{kj} \, p_{k}^{\text{ul}}  \lVert \bm \gamma^{\text{ul}}_{kjj}  \rVert^2 	 /  ( \sigma^2 +  \sum_{j' \in \mathcal{J}} \sum_{k' \in \mathcal{K}} u_{k'j'} \, p_{k'}^{\text{ul}}  \lvert \bm v^T_{kj} \bm \gamma^{\text{ul}}_{k'jj'} \rvert^2   ) \nonumber \\
				&+ (1+  \chi^{\text{dl}}_{kj} ) d_{kj\, } p_{k}^{\text{dl}} \, \lVert\bm \gamma^{\text{dl}}_{kjj} \rVert^2 / ( \sigma^2 +  \sum_{j' \in \mathcal{J}} \sum_{k' \in \mathcal{K}} d_{k'j'\, } p_{k'}^{\text{dl}} \, \lvert (\bm \gamma^{\text{dl}}_{kjj'})^T \bm w_{k'j'}\rvert^2    + \sum_{j' \in \mathcal{J}_{-j}} \sum_{k' \in \mathcal{K}_{-k} }  u_{k'j'} \, p_{k'}^{\text{ul}} \,\lvert \nu_{kk'jj'}\rvert^2) \Big]. \label{eq:9} \\
f_2&(\bm d, \bm r, \bm u, \bm p^{\text{dl}}, \bm p^{\text{ul}}, \,\bm \chi^{\text{dl}} , \bm \chi^{\text{ul}}, \bm \xi^{\text{dl}}, \bm \xi^{\text{ul}} ) =   \sum_{k \in \mathcal{K}} \sum_{j\in \mathcal{J}}  \Big[ \log_2 (1+  	\chi^{\text{dl}}_{kj} )	+ \log_2 ( 1 + \chi^{\text{ul}}_{kj} ) - \chi^{\text{dl}}_{kj}  - \chi^{\text{ul}}_{kj} 
    +  2 \xi_{kj}^{\text{dl}} \, d_{kj }\sqrt{(1+  	\chi^{\text{dl}}_{kj} ) p_{k}^{\text{dl}}} \, \lVert\bm \gamma^{\text{dl}}_{kjj}\rVert \nonumber \\
    &+ 2 \xi_{kj}^{\text{ul}} \, u_{kj }\sqrt{(1+  	\chi^{\text{ul}}_{kj} ) p_{k}^{\text{ul}}} \, \lVert\bm \gamma^{\text{ul}}_{kjj}  \rVert - (\xi_{kj}^{\text{dl}})^2 ( \sigma^2 +  \sum_{j' \in \mathcal{J}} \sum_{k' \in \mathcal{K}} d_{k'j'\, } p_{k'}^{\text{dl}} \, \lvert (\bm \gamma^{\text{dl}}_{kjj'})^T \bm w_{k'j'} \rvert^2    + \sum_{j' \in \mathcal{J}_{-j}} \sum_{k' \in \mathcal{K}_{-k}}  u_{k'j'} \, p_{k'}^{\text{ul}} \,\lvert \nu_{kk'jj'} \rvert^2) \nonumber \\ 
    &- (\xi_{kj}^{\text{ul}})^2 ( \sigma^2 +  \sum_{j' \in \mathcal{J}} \sum_{k' \in \mathcal{K}} u_{k'j'} \, p_{k'}^{\text{ul}}  \lvert \bm v^T_{kj} \bm \gamma^{\text{ul}}_{k'jj'} \rvert^2   )  \Big]. \label{eq:10}
        \end{align}
        \end{minipage}
    \end{figure*}
    
The mathematical program in \eqref{eq:8} is non-convex due to sum of logarithmic functions of \cglspl{sinr}, and the \cgls{sinr} is expressed as ratio of sums containing product of continuous power and binary variables. To solve, we do an equivalent transform, first using Lagrangian dual transform~\cite{Shen2018}, to take out the \cgls{sinr} outside logarithm; then, we apply Quadratic transform~\cite{Shen2018}, to make the ratios into a sum of two terms. After transformation, we do alternating optimization to solve the transformed problem, where we update block of variables. Finally, the algorithm is described.

First, we apply Lagrangian dual transform from Theorem~3 in \cite{Shen2018} to transform the problem in $(\mathrm{P0})$ as
\begin{subequations}\label{eq:18}
	\begin{align}
    (\mathrm{P1): }\underset{
      \substack{
        \bm d, \bm r, \bm u, \bm p^{\text{dl}}, \bm p^{\text{ul}},
        \bm \chi^{\text{dl}}, \bm \chi^{\text{ul}}
      }
    }{\text{max}}	\, \, \, \, &f_1(\bm d, \bm r, \bm u, \bm p^{\text{dl}}, \bm p^{\text{ul}}, \bm \chi^{\text{dl}} , \bm \chi^{\text{ul}}  ) \nonumber \\
		\text{s.t. \hspace{10pt}	}
          & \eqref{eq:8b}-\eqref{eq:8g}, \quad \chi_{kj}^{\mathrm{dl}}, \chi_{kj}^{\mathrm{ul}} \in \mathbb{R}_+
	\end{align}
\end{subequations}
where with the auxiliary variables $\bm \chi^{\text{dl}} = [\chi^{\text{dl}}_{kj}]_{k=1, j=1}^{k= K, j = J}$ and $\bm \chi^{\text{ul}} = [\chi^{\text{ul}}_{kj}]_{k=1, j=1}^{k= K, j = J}$, the objective $f_1$ is shown at the bottom in \eqref{eq:9}. Then, we apply the Quadratic transform from Theorem~1 in~\cite{Shen2018} to transform $(\mathrm{P1})$ as
\begin{subequations}\label{eq:19}
	\begin{align}
    	(\mathrm{P2): } \underset{ \substack{\bm d, \bm r, \bm u, \bm p^{\text{dl}}, \bm p^{\text{ul}}, \\ \bm \chi^{\text{dl}} , \bm \chi^{\text{ul}}, \bm \xi^{\text{dl}}, \bm \xi^{\text{ul}} }}{\text{max}}	\, \, \, \, &f_2(\bm d, \bm r, \bm u, \bm p^{\text{dl}}, \bm p^{\text{ul}}, \bm \chi^{\text{dl}} , \bm \chi^{\text{ul}}, \bm \xi^{\text{dl}}, \bm \xi^{\text{ul}} ) \nonumber\\
		\text{s.t. \hspace{5pt}	}
          & \eqref{eq:8b}-\eqref{eq:8g}, \\
          & \chi_{kj}^{\mathrm{dl}}, \chi_{kj}^{\mathrm{ul}} \in \mathbb{R}_+, \xi_{kj}^{\mathrm{dl}}, \xi_{kj}^{\mathrm{ul}} \in \mathbb{R}
	\end{align}
\end{subequations}
where with the auxiliary variables $\bm \xi^{\text{dl}} = [\xi^{\text{dl}}_{kj}]_{k=1, j=1}^{k= K, j = J}$ and $\bm \xi^{\text{ul}} = [\xi^{\text{ul}}_{kj}]_{k=1, j=1}^{k= K, j = J}$, the objective $f_2$ is shown at the bottom in \eqref{eq:10}. The equivalence between the transformed problems is established below:
\begin{proposition}

Consider a solution to $(\mathrm{P2})$ given by 
$(\bm d^\dagger, \bm r^\dagger, \bm u^\dagger, \bm p^{\mathrm{dl}\dagger}, \bm p^{\mathrm{ul}\dagger}, \bm \chi^{\mathrm{dl}\dagger}, \bm \chi^{\mathrm{ul}\dagger}, \bm \xi^{\mathrm{dl}\dagger}, \bm \xi^{\mathrm{ul}\dagger})$.
Then, the tuple $(\bm d^\dagger, \bm r^\dagger, \bm u^\dagger, \bm p^{\mathrm{dl}\dagger}, \bm p^{\mathrm{ul}\dagger}, \bm \chi^{\mathrm{dl}\dagger}, \bm \chi^{\mathrm{ul}\dagger})$ is also a solution to $(\mathrm{P1})$, and further, $(\bm d^\dagger, \bm r^\dagger, \bm u^\dagger, \bm p^{\mathrm{dl}\dagger}, \bm p^{\mathrm{ul}\dagger})$ is also a solution to $(\mathrm{P0})$.

\end{proposition}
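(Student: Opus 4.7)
The plan is to peel off the two equivalence transforms in the reverse order of their application, using the fact that the auxiliary variables $\bm \xi^{\mathrm{dl}}, \bm \xi^{\mathrm{ul}}, \bm \chi^{\mathrm{dl}}, \bm \chi^{\mathrm{ul}}$ enter only the objective of $(\mathrm{P2})$ (and their own sign/reality conditions), so the outer maximum factors as a maximum over the primary variables of a partial maximum over the auxiliaries.

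First, I would fix $(\bm d, \bm r, \bm u, \bm p^{\mathrm{dl}}, \bm p^{\mathrm{ul}}, \bm \chi^{\mathrm{dl}}, \bm \chi^{\mathrm{ul}})$ and maximise $f_2$ over $(\bm \xi^{\mathrm{dl}}, \bm \xi^{\mathrm{ul}})$. Each $\xi_{kj}^{\mathrm{dl}}$ appears in $f_2$ only through a single concave quadratic of the form $2\xi_{kj}^{\mathrm{dl}} a_{kj} - (\xi_{kj}^{\mathrm{dl}})^2 b_{kj}$, with $a_{kj} = d_{kj}\sqrt{(1+\chi_{kj}^{\mathrm{dl}})\, p_k^{\mathrm{dl}}}\,\lVert \bm \gamma^{\mathrm{dl}}_{kjj} \rVert \ge 0$ and $b_{kj}>0$ the corresponding interference-plus-noise denominator. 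The first-order condition gives $\xi_{kj}^{\mathrm{dl}\,\star}=a_{kj}/b_{kj}$ and back-substitution yields $a_{kj}^2/b_{kj}$, reconstructing exactly the $(1+\chi_{kj}^{\mathrm{dl}})$-weighted SINR term appearing in $f_1$. The same reasoning for $\xi_{kj}^{\mathrm{ul}}$ reconstructs the remaining terms, recovering $f_1$ from $f_2$ by the Quadratic transform of Theorem~1 in \cite{Shen2018}.

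Next, I would fix $(\bm d, \bm r, \bm u, \bm p^{\mathrm{dl}}, \bm p^{\mathrm{ul}})$ and maximise $f_1$ over $(\bm \chi^{\mathrm{dl}}, \bm \chi^{\mathrm{ul}})$. The problem again separates into scalar optimizations; $\partial f_1/\partial \chi_{kj}^{\mathrm{dl}}=0$ gives $\chi_{kj}^{\mathrm{dl}\,\star}$ equal to the \cgls{dl} \cgls{sinr} inside the corresponding logarithm in \eqref{eq:7}, and substitution yields $\log_2(1+\mathrm{SINR})$. The analogous step for $\chi_{kj}^{\mathrm{ul}}$ recovers $f_0$ pointwise, which is the Lagrangian dual transform of Theorem~3 in \cite{Shen2018}. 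Chaining the two reductions gives, for every feasible primary tuple,
\begin{equation*}
\max_{\bm \xi^{\mathrm{dl}}, \bm \xi^{\mathrm{ul}}}\ \max_{\bm \chi^{\mathrm{dl}}, \bm \chi^{\mathrm{ul}}} f_2 \;=\; \max_{\bm \chi^{\mathrm{dl}}, \bm \chi^{\mathrm{ul}}} f_1 \;=\; f_0,
\end{equation*}
with all maxima attained, so $(\mathrm{P0})$, $(\mathrm{P1})$ and $(\mathrm{P2})$ share the same optimal value. At any optimum of $(\mathrm{P2})$ the auxiliary blocks must realise these inner maxima (else replacing them with $\bm \chi^\star, \bm \xi^\star$ would strictly increase $f_2$ without violating any constraint, contradicting optimality), so the stated projections are optimal for $(\mathrm{P1})$ and $(\mathrm{P0})$ respectively.

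The only genuinely delicate point I anticipate is the degenerate case $d_{kj}=0$ (or $u_{kj}=0$), in which $a_{kj}=0$, so $\xi_{kj}^\star=0$ and $\chi_{kj}^\star=0$ while the matching logarithm in $f_0$ vanishes; the identities then extend without limit arguments. The rest is bookkeeping: constraints \eqref{eq:8b}--\eqref{eq:8g} involve only the primary variables, which is what guarantees the separability of the auxiliary blocks and lets the closed-form inner optima slot into the standard Shen--Yu identities without modification.
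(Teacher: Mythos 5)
Your proposal is correct and follows essentially the same route as the paper's proof: maximize $f_2$ over $\bm \xi^{\mathrm{dl}},\bm \xi^{\mathrm{ul}}$ in closed form to recover $f_1$ (quadratic transform), then maximize $f_1$ over $\bm \chi^{\mathrm{dl}},\bm \chi^{\mathrm{ul}}$ to recover $f_0$ (Lagrangian dual transform), using that the auxiliaries appear only in the objective. Your write-up is in fact slightly more complete than the paper's, since you explicitly argue that an optimum of $(\mathrm{P2})$ must attain the inner maxima and hence its projections solve $(\mathrm{P1})$ and $(\mathrm{P0})$, a step the paper leaves implicit.
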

\begin{proof}
 See Appendix~\ref{appendix:1}.
\end{proof}

To solve $(\mathrm{P0})$, we first solve $(\mathrm{P2})$. To solve $(\mathrm{P2})$, we update variables in three blocks: (i) update $\bm \xi^{\mathrm{dl}}$ and $\bm \xi^{\mathrm{ul}}$, (ii) update $\bm \chi^{\mathrm{dl}}$ and $\bm \chi^{\mathrm{ul}}$, and (iii) jointly update $\bm d, \bm u, \bm p^{\mathrm{dl}}$, and $\bm p^{\mathrm{ul}}$. Since the spin variables $\bm r$ add extra complexity, we perform an exhaustive search over all possible spin configurations.

\subsection{Update $\bm \chi^{\mathrm{dl}}$ and $\bm \chi^{\mathrm{ul}}$}
\label{sec:3a}
The objective $f_1$ is strictly concave in $\chi^{\text{dl}}_{kj}$ and $\chi^{\text{ul}}_{kj}$. Hence, by keeping other parameters fixed, the unique optimal value of $\chi^{\text{dl}}_{kj}$ and $\chi^{\text{ul}}_{kj}$ can be obtained from $\partial f_1 / \partial \chi^{\text{dl}}_{kj} = 0$ and $\partial f_1 / \partial \chi^{\text{ul}}_{kj} = 0$, respectively, as
\begin{subequations}\label{eq:14}
    \begin{align}
    \chi_{kj}^{\text{dl}\star} &=\, d_{kj}\, p_{k}^{\text{dl}} \, \lVert \bm \gamma^{\mathrm{dl}}_{kjj} \rVert^2 \nonumber \\ \Big/ &\Big( \sigma^2 +  \sum_{j' \in \mathcal{J}} \sum_{k' \in \mathcal{K}_{-k}} d_{k'j'}\, p_{k'}^{\text{dl}} \, \lvert ( \bm \gamma^{\mathrm{dl}}_{kjj'} )^T  \bm w_{k'j'}\rvert^2  \nonumber \\
   &+ \sum_{j' \in \mathcal{J}_{-j} }\sum_{k' \in \mathcal{K}_{-k}}  u_{k'j'} \, p_{k'}^{\text{ul}}\, \lvert \nu_{kk'jj'}\rvert^2 \Big),\\
    \chi_{kj}^{\text{ul}\star} &= u_{kj} \, p_{k}^{\text{ul}}  \lVert \bm \gamma^{\mathrm{ul}}_{kjj} \rVert^2 	\nonumber\\ \Big/ & \Big( \sigma^2 +  \sum_{j' \in \mathcal{J}} \sum_{k' \in \mathcal{K}_{-k}} u_{k'j'} \, p_{k'}^{\text{ul}} \, \lvert \bm v_{kj}^T \bm \gamma^{\mathrm{ul}}_{k'jj'}\rvert^2 \Big).
\end{align}
\end{subequations}
Here, $\chi^{\text{dl}}_{kj}$ and $\chi^{\text{ul}}_{kj}$ are updated simultaneously using the above equations, since they are independent of each other.

\subsection{Update $\bm \xi^{\mathrm{dl}}$ and $\bm \xi^{\mathrm{ul}}$}
The objective $f_2$ is strictly concave in $\xi^{\text{dl}}_{kj}$ and $\xi^{\text{ul}}_{kj}$. Hence, by keeping other parameters fixed, the unique optimal value of $\xi^{\text{dl}}_{kj}$ and $\xi^{\text{ul}}_{kj}$ can be obtained from $\partial f_2 / \partial \xi^{\text{dl}}_{kj} = 0$ and $\partial f_2 / \partial \xi^{\text{ul}}_{kj} = 0$, respectively, as
\begin{subequations}\label{eq:13}
    \begin{align}
        \xi^{\text{dl}\star}_{kj} = \,&d_{kj }\sqrt{(1+  \chi^{\text{dl}}_{kj}) p_{k}^{\text{dl}}} \, \lVert\bm \gamma^{\text{dl}}_{kjj}  \rVert  \nonumber \\
         \Big/ &\Big( \sigma^2  +\sum_{j' \in \mathcal{J}} \sum_{k' \in \mathcal{K}} d_{k'j'}\, p_{k'}^{\text{dl}} \, \lvert (\bm \gamma^{\text{dl}}_{kjj'})^T \bm w_{k'j'} \rvert^2    \nonumber \\
        & + \sum_{j' \in \mathcal{J}_{-j}} \sum_{k' \in \mathcal{K}_{-k}}  u_{k'j'} \, p_{k'}^{\text{ul}} \lvert \, \nu_{kk'jj'}\rvert^2 \Big), \\
         \xi^{\text{ul}\star}_{kj}  = \,&u_{kj }\sqrt{(1+  	\chi^{\text{ul}}_{kj}) p_{k}^{\text{ul}}} \, \lVert\bm \gamma^{\text{ul}}_{kjj} \rVert  \nonumber\\
            &\Big/\Big( \sigma^2  + \sum_{j' \in \mathcal{J}} \sum_{k' \in \mathcal{K}} u_{k'j'} \, p_{k'}^{\text{ul}}   \lvert \bm v^T_{kj} \bm \gamma^{\text{ul}}_{k'jj'} \rvert^2  \Big).
    \end{align}
\end{subequations}
Similar to \ref{sec:3a}, $\xi^{\text{dl}}_{kj}$ and $\xi^{\text{ul}}_{kj}$ are updated simultaneously.

\subsection{Update Jointly  $\bm d,  \bm u, \bm p^{\text{dl}},\bm p^{\text{ul}}$}
We jointly update $(\bm d,  \bm u, \bm p^{\text{dl}}, \bm p^{\text{ul}})$ as
 \begin{subequations}\label{eq:22}
 		\begin{align}
		&(\mathrm{P3): } \underset{ \substack{\bm d,  \bm u, \bm p^{\text{dl}}, \bm p^{\text{ul}}}}{\text{max}}  f_2( \bm d, \bm r, \bm u, \bm p^{\text{dl}}, \bm p^{\text{ul}}, \bm \chi^{\text{dl}\star} , \bm \chi^{\text{ul}\star}, \bm \xi^{\text{dl}\star},  \bm \xi^{\text{ul}\star} ) \nonumber\\
		& \qquad \hspace{50pt} \text{s.t. } 
		\eqref{eq:8b}-\eqref{eq:8g}.
	\end{align}
\end{subequations}
The above mathematical program involves the product of the square root of the power variable and a binary variable, as well as the product of the power variable and a binary variable. To handle these nonlinear terms, we apply the following substitutions $
	t_{k}^{\mathrm{dl}} = \sqrt{p_{k}^{\mathrm{dl}}}, t_{k}^{\mathrm{ul}} = \sqrt{p_{k}^{\mathrm{ul}}}, \text{for } k = 1,\dots,K,$
and define $	\bm t^{\mathrm{dl}} = [t_{k}^{\mathrm{dl}}]_{k=1}^K, \bm t^{\mathrm{ul}} = [t_{k}^{\mathrm{ul}}]_{k=1}^K$. This substitution is one-to-one for $t_{k}^{\mathrm{dl}} \geq 0$ and $t_{k}^{\mathrm{ul}} \geq 0$. Therefore, both the problems are equivalent. Then, the continuous-binary product, can be handled by the following substitution
\begin{align}
    z^{\mathrm{dl}}_{kj} = d_{kj} t_k^{\mathrm{dl}}, z^{\mathrm{ul}}_{kj} = u_{kj} t_k^{\mathrm{ul}}, \quad \text{for all $k$ and $j$} \label{eq:18}
\end{align}
where the auxiliary variables are $\bm z^{\mathrm{dl}} = [z^{\mathrm{dl}}_{kj}]_{k=1,j=1}^{k=K, j=J}$ and $\bm z^{\mathrm{ul}} = [z^{\mathrm{ul}}_{kj}]_{k=1,j=1}^{k=K, j=J}$. However, the above substitution is difficult to implement; thus, we use standard big-$M$~\cite{Dejans2025} to enforce the above equality. The constraints are as follows:
\begin{subequations}\label{eq:19}
    \begin{align}
         &0 \leq z_{kj}^{\text{dl}} \leq  t^{\text{dl}}_{k}, 0 \leq z_{kj}^{\text{ul}} \leq  t^{\text{ul}}_{k}, \label{eq:19a}\\
        &t_k^{\text{dl}} - M(1-d_{kj}) \leq z_{kj}^{\text{dl}} \leq Md_{kj}, \label{eq:19b}\\
        &t_k^{\text{ul}} - M(1-u_{kj}) \leq z_{kj}^{\text{ul}} \leq Mu_{kj}.\label{eq:19c}
    \end{align}
\end{subequations}
Note, applying big-$M$, we don't lose any optimality as the above constraints, enforces the equality in \eqref{eq:18} exactly if the value of $M$ chosen properly~\cite{Dejans2025}.
Finally, the problem in $(\mathrm{P3})$ is written with big-$M$ constraints as
\begin{figure*}[b]
        \hrulefill 
		\vspace{0.1cm}
    \begin{minipage}{\textwidth}
        \begin{align}
            f_3(\bm d, \bm r, \bm u, &\bm t^{\text{dl}}, \bm t^{\text{ul}}, \bm z^{\mathrm{dl}}, \bm z^{\mathrm{ul}},\bm \chi^{\text{dl}} , \bm \chi^{\text{ul}}, \bm \xi^{\text{dl}}, \bm \xi^{\text{ul}} ) =   \sum_{k \in \mathcal{K}} \sum_{j\in \mathcal{J}}  \Big[\log_2 (1+  	\chi^{\text{dl}}_{kj} )	+ \log_2 ( 1 + \chi^{\text{ul}}_{kj} ) - \chi^{\text{dl}}_{kj}  - \chi^{\text{ul}}_{kj} - (\xi_{kj}^{\text{dl}} \sigma)^2 - (\xi_{kj}^{\text{ul}} \sigma)^2
             \nonumber \\
            &+  2 \xi_{kj}^{\text{dl}} \, z_{kj}^{\mathrm{dl}}\sqrt{(1+  	\chi^{\text{dl}}_{kj} )} \, \lVert\bm \gamma^{\text{dl}}_{kjj}\rVert + 2 \xi_{kj}^{\text{ul}} \, z_{kj}^{\mathrm{ul}}\sqrt{(1+  	\chi^{\text{ul}}_{kj} )} \, \lVert\bm \gamma^{\text{ul}}_{kjj}  \rVert - (z_{kj}^{\mathrm{dl}})^2 \sum_{j' \in \mathcal{J}} \sum_{k' \in \mathcal{K}} (\xi_{k'j'}^{\text{dl}})^2  \, \lvert (\bm \gamma^{\text{dl}}_{k'j'j})^T \bm w_{kj} \rvert^2    \nonumber \\ 
            & - (z_{kj}^{\mathrm{ul}})^2 \sum_{j' \in \mathcal{J}_{-j}} \sum_{k' \in \mathcal{K}_{-k}} (\xi_{k'j'}^{\text{dl}})^2  \,\lvert \nu_{k'kj'j} \rvert^2- (z_{kj}^{\mathrm{ul}})^2  \sum_{j' \in \mathcal{J}} \sum_{k' \in \mathcal{K}} (\xi_{k'j'}^{\text{ul}})^2   \lvert \bm v^T_{k'j'} \bm \gamma^{\text{ul}}_{kj'j} \rvert^2   \Big]. \label{eq:20}
        \end{align}
    \end{minipage}
\end{figure*}
\begin{subequations}\label{eq:20}
		\begin{align}
		(\mathrm{P4): } &\underset{ \substack{\bm d,  \bm u, \bm t^{\text{dl}}, \bm t^{\text{ul}}, \bm z^{\mathrm{dl}}, \bm z^{\mathrm{ul}}}}{\text{max}}  f_3( \bm d, \bm r, \bm u, \bm t^{\text{dl}}, \bm t^{\text{ul}}, \bm z^{\mathrm{dl}},  \bm z^{\mathrm{ul}}, \bm \chi^{\text{dl}\star} , \nonumber \\ 
        &\hspace{135pt} \bm \chi^{\text{ul}\star}, \bm \xi^{\text{dl}\star},   \bm \xi^{\text{ul}\star}) \nonumber\\
		& \text{s.t. } \Big\lvert \sum_{j \in \mathcal{J}} d_{kj} r_{j} - \sum_{j \in \mathcal{J}}  u_{kj} r_{j} \Big \rvert  \nonumber \\
		&\quad \qquad \qquad  \leq M( 2- \sum_{j \in \mathcal{J}} d_{kj} - \sum_{j \in \mathcal{J}} u_{kj} ), \forall k \label{eq:20a} \\
        & \qquad \sum_{k \in \mathcal{K}} (z_{kj}^{\text{dl}})^2\leq p_{j}^{\text{max}}, \text{ for all } j\\
		& \qquad \eqref{eq:8c}, \eqref{eq:8e}-\eqref{eq:8g}, \eqref{eq:19a}-\eqref{eq:19c}.
	\end{align}
\end{subequations}
where in \eqref{eq:20a}, we enforce the same logic in \eqref{eq:8b} by a large number $M$ for ease of implementation in  solvers. The objective $f_3$ is shown at the bottom in \eqref{eq:20}, which is concave in $z_{kj}^{\mathrm{dl}}$ and $z_{kj}^{\mathrm{ul}}$. Further, continuous relaxation of feasible set is convex. Therefore, the program can be solved to optimality with standard techniques, e.g., branch \& bound, and convex optimization, implemented in industry grade solvers, e.g., MOSEK, which is used here.

As mentioned before, the value of $M$ must be chosen properly. First, in constraint \eqref{eq:19}, a suitable value is $M \geq \max_{k,j} \{\sqrt{p_j^{\text{max}}}, \sqrt{p_k^{\text{max}}} \}$~\cite{Dejans2025}. Additionally, we also used $M$ for conditional switch in \eqref{eq:20a}, for this $M$ must be $\geq \max_k \, \lvert \sum_{j \in \mathcal{J}} d_{kj} r_{j} - \sum_{j \in \mathcal{J}}  u_{kj} r_{j} \rvert $; whose maximum value is $1$, due to constraint~\eqref{eq:8c}. Combining two conditions, we choose, $M = \lceil \max \{ \max_{k,j} \{\sqrt{p_j^{\text{max}}}, \sqrt{p_k^{\text{max}}} \}, 1\} \rceil$, where $\lceil \cdot \rceil$ denotes the smallest integer function.

Finally, the complete solution procedure is outlined in Algorithm~\ref{algo:1}. The algorithm iterates over all binary combinations of $\bm r$. For each fixed $\bm r$, the variables $\bm d, \bm u, \bm p^{\text{dl}}, \bm p^{\text{ul}}$ are initialized to satisfy \eqref{eq:8b}–\eqref{eq:8g}. Then, the variables are updated block-wise: step-4: update $\bm \chi^{\text{dl}}, \bm \chi^{\text{ul}}$ by \eqref{eq:14}; step-5: update $\bm \xi^{\text{dl}}, \bm \xi^{\text{ul}}$ by \eqref{eq:13}; and step-6: solve $(\mathrm{P4})$ to obtain $\bm d, \bm u, \bm p^{\text{dl}}, \bm p^{\text{ul}}$. These updates repeat until convergence for each $\bm r$, after convergence, select
$\big(\bm d^\dagger, \bm r^\dagger, \bm u^\dagger, \bm p^{\mathrm{dl}\dagger}, \bm p^{\mathrm{ul}\dagger}, \bm \chi^{\mathrm{dl}\dagger}, \bm \chi^{\mathrm{ul}\dagger}, \bm \xi^{\mathrm{dl}\dagger}, \bm \xi^{\mathrm{ul}\dagger}\big)$ 
corresponding to the highest achieved objective value $f_2$. Then, the optimal $f_0$ is obtained from~\eqref{eq:7}.

\begin{proposition}
    Algorithm~\ref{algo:1} converges in a finite number of iterations. Further, it has at least one limit point, and every limit point is a stationary point for $\bm \chi^{\mathrm{dl}}$, $\bm \chi^{\mathrm{ul}}$, $\bm \xi^{\mathrm{dl}}$, and $\bm \xi^{\mathrm{ul}}$.
\end{proposition}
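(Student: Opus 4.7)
The plan is to establish three facts in sequence: (i) the inner alternating procedure generates a monotone, bounded sequence of objective values, (ii) the overall algorithm terminates after finitely many iterations, and (iii) every limit point satisfies the stationarity conditions for the four auxiliary blocks.

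First, I would fix a spin configuration $\bm r$ and argue that the inner alternating procedure yields a non-decreasing sequence of $f_2$ values. Each of the three block updates is a \emph{global} maximizer of $f_2$ with the other blocks held fixed: the closed-form rules in \eqref{eq:14} and \eqref{eq:13} come from the unique zero of the partial derivative of a strictly concave univariate objective (Sections~III.A and~III.B), while $(\mathrm{P4})$ is solved to optimality by MOSEK (the continuous relaxation of its feasible set is convex and the objective is concave in the continuous variables). Consequently, $f_2$ cannot decrease across any update. Boundedness follows because the power variables lie in compact boxes defined by \eqref{eq:8d}--\eqref{eq:8f}, the scheduling/spin variables take values in a finite set, and $\sigma^2>0$ keeps the denominators in \eqref{eq:14}--\eqref{eq:13} bounded away from zero, so the closed-form $\bm\chi^\star,\bm\xi^\star$ also remain uniformly bounded. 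Thus the iterates stay in a compact set and $f_2$ is bounded above, giving convergence of the objective sequence by the monotone-convergence theorem.

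Second, finite termination is obtained by combining two observations: the outer enumeration of $\bm r\in\{0,1\}^J$ has only $2^J$ entries, and the inner loop is equipped with a standard numerical tolerance. Since the inner-objective sequence is Cauchy in $\mathbb{R}$, the successive-difference stopping test is triggered after finitely many inner iterations for each $\bm r$, and the final selection of the incumbent with the largest $f_2$ is a single pass; the algorithm therefore halts in finitely many steps.

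Third, for the stationarity claim I would invoke Bolzano--Weierstrass on the compact iterate set to extract a convergent subsequence with limit $(\bm d^\dagger,\bm r^\dagger,\bm u^\dagger,\bm p^{\mathrm{dl}\dagger},\bm p^{\mathrm{ul}\dagger},\bm\chi^{\mathrm{dl}\dagger},\bm\chi^{\mathrm{ul}\dagger},\bm\xi^{\mathrm{dl}\dagger},\bm\xi^{\mathrm{ul}\dagger})$. Strict concavity of $f_2$ in each $\chi_{kj}^{\mathrm{dl}},\chi_{kj}^{\mathrm{ul}},\xi_{kj}^{\mathrm{dl}},\xi_{kj}^{\mathrm{ul}}$ makes the closed-form maps in \eqref{eq:14} and \eqref{eq:13} continuous in the remaining variables. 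Passing to the limit along the subsequence and using that the update equations are satisfied at every iterate gives $\partial f_2/\partial \chi_{kj}^{\mathrm{dl}}=0$ at $\bm\chi^{\mathrm{dl}\dagger}$, and the analogous identities for $\bm\chi^{\mathrm{ul}\dagger},\bm\xi^{\mathrm{dl}\dagger},\bm\xi^{\mathrm{ul}\dagger}$; these are exactly the claimed stationarity conditions.

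The main obstacle I anticipate is the interaction between the continuous blocks and the mixed-integer block $(\bm d,\bm u,\bm p^{\mathrm{dl}},\bm p^{\mathrm{ul}})$, because classical block-coordinate convergence theorems assume smooth blocks. My preferred route around this is to note that there are only finitely many feasible patterns $(\bm d,\bm u)\in\{0,1\}^{2KJ}$ and only finitely many $\bm r$; since $f_2$ is strictly increasing whenever a discrete pattern changes and the sequence of objective values converges, the discrete variables can change at most finitely often, after which the argument reduces to standard block-coordinate ascent on the continuous variables and the continuity argument above applies cleanly. Making this "eventually-constant discrete block" step precise is where I expect to spend the most care.
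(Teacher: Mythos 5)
Your core argument is the same as the paper's: both rest on the observation that each block update (the closed-form $\bm\chi$ and $\bm\xi$ updates justified by strict concavity, and the exact solution of $(\mathrm{P4})$ for the remaining block) cannot decrease $f_2$, that the objective is bounded because the powers are box-constrained and the binary variables range over a finite set, and that monotone bounded ascent together with the finite enumeration over $\bm r\in\{0,1\}^J$ gives termination. The paper's appendix consists essentially of this ascent chain alone; it infers finite convergence from strict increase plus boundedness and never explicitly argues the ``every limit point is a stationary point'' part of the claim, so on that score your proposal is more complete than the published proof, not less.

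The one step in your write-up that does not hold as stated is the ``eventually-constant discrete block'' argument: you justify it by asserting that $f_2$ strictly increases whenever the pattern $(\bm d,\bm u)$ changes, but nothing forces this --- $(\mathrm{P4})$ may return a different binary pattern achieving exactly the same objective value, so ties can in principle make the discrete variables oscillate indefinitely even though the objective values converge. If the pattern does oscillate, your limit-passing argument for stationarity hits precisely the index-offset problem you flagged: the updates \eqref{eq:14} and \eqref{eq:13} at iteration $t$ are computed from the previous iterate's $(\bm d,\bm u,\bm p^{\mathrm{dl}},\bm p^{\mathrm{ul}})$, so to conclude $\partial f_2/\partial\chi_{kj}^{\mathrm{dl}}=0$ (and the analogues) at the limit you need the subsequence and its shift to converge to the same point. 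This is repairable --- e.g., impose a tie-breaking rule in $(\mathrm{P4})$, or pass to a sub-subsequence along which the (finitely many) discrete patterns are constant and apply the continuity of the maps \eqref{eq:14}, \eqref{eq:13} for that fixed pattern --- but as written the strictness claim is a genuine gap. For perspective, the paper's own proof is looser still here: its strict inequality at the $\bm\chi$ step only holds away from stationarity, and it never addresses ties or the limit-point question at all, so your proposal is the more careful of the two once this one claim is replaced by a proper fix.
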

\begin{proof}
    See Appendix~\ref{appendix:2}.
\end{proof}

\begin{algorithm}[t]
	\caption{Solution for optimization $(\mathrm{P0})$}
	\begin{algorithmic}[1]
         \ForAll{$\bm r = [r_1, \cdots, r_J] \in \{0,1\}^J$}
            \State \textbf{Initialize:} feasible values of $\bm d, \bm u, \bm p^{\text{dl}}, \bm p^{\text{ul}}$
            \Repeat
                \State  update $\bm \chi^{\text{dl}}, \bm \chi^{\text{ul}}$ by \eqref{eq:14}
                \State update $\bm \xi^{\text{dl}}, \bm \xi^{\text{ul}}$ by \eqref{eq:13}
                \State solve $(\mathrm{P4})$ to get $\bm d, \bm u, \bm p^{\text{dl}}, \bm p^{\text{ul}}$
            \Until{convergence}
         \EndFor
         \State select 
$\big(\bm d^\dagger, \bm r^\dagger, \bm u^\dagger, \bm p^{\mathrm{dl}\dagger}, \bm p^{\mathrm{ul}\dagger}, \bm \chi^{\mathrm{dl}\dagger}, \bm \chi^{\mathrm{ul}\dagger}, \bm \xi^{\mathrm{dl}\dagger}, \bm \xi^{\mathrm{ul}\dagger}\big)$ 
corresponding to the highest achieved objective value $f_2$ over all feasible $\bm r$. Then, the optimal objective $f_0$ is obtained from~\eqref{eq:7} using $(\bm d^\dagger, \bm r^\dagger, \bm u^\dagger, \bm p^{\mathrm{dl}\dagger}, \bm p^{\mathrm{ul}\dagger})$. 
	\end{algorithmic}
    \label{algo:1}
\end{algorithm}

\section{Numerical Results}
\label{sec:4}


In this section, we present the numerical results corresponding to the above formulations. We consider a scenario where $K=10$ randomly distributed \cglspl{ue} are placed within a circular region of radius 100~\si{m} centered around location $(53.0793^\circ\mathrm{N},\, 8.8017^\circ\mathrm{E})$. These \cglspl{ue} are served by $J$ satellites. The frequency bands are $f_1 = 2.4$~\si{GHz} and $f_2 = 1.9$~\si{GHz}, with $B_1 = B_2 = 10$~\si{MHz}. Each satellite is equipped with $N = 16 \times 16$ antennas arranged in a \gls{upa} configuration. The inter-\cgls{ue} channels are modeled as purely \cgls{los}. The maximum transmit power at each satellite is 20~\si{W}, while the \cglspl{ue} transmit with a maximum power of 2~\si{W} in the \cgls{ul} direction. The satellites are positioned at an altitude of 500~\si{km}, each deployed at different elevation and inclination angles.

Figure~\ref{fig:4} illustrates the evolution of the objective  $f_2$ with iterations for $J = [2, 3, 4]$. It can be observed that the proposed algorithm converges within a finite number of steps, consistent with the convergence result established in Proposition~2. 

Figure~\ref{fig:5} shows the \gls{cdf} of the sum-rate for $J=[2,3,4]$. In the proposed setup, the ``with spin'' configuration implies that each satellite $j$ adaptively selects its $r_j$ through the optimization. Conversely, in the ``without spin'' case, we fix $r_j = 0$ for all satellites $j$, which corresponds to a conventional non-adaptive configuration. It can be observed that for $J=2$, both configurations exhibit nearly identical performance. This behavior arises from the strong inter-\cgls{ue} interference that occurs when the \cglspl{ue} operate in opposite spin states. Since the inter-\cgls{ue} channels are modeled as purely \cgls{los}, the benefits of frequency spin are effectively neutralized in this case. However, as the number of satellites increases, the interference level also grows, and the proposed ``with spin'' strategy demonstrates significant gains, approximately 78~\% improvement for $J=3$ and nearly 94~\% improvement for $J=4$ in the tail of the distribution, as illustrated in the figure. 

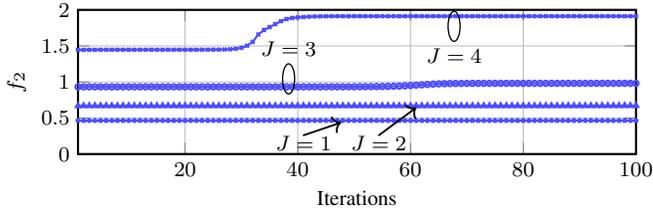
\begin{figure}
    \centering
    \footnotesize{
\begin{tikzpicture}
\begin{axis}[
    width=9cm,
    height=3.5cm,
    xlabel={Iterations},
    ylabel={$f_2$},
    xmin=1, xmax=100,
    ymin = 0, ymax = 2,
    grid=major,
    legend style={at={(0.5,-0.15)}, anchor=north, legend columns=-1},
    thick,
]

\addplot[
    blue!70,
    mark=star,
    mark size=1pt,
    mark options={solid},
]
table [x expr=\coordindex+1, y index=0] {Data/f2_J_1_K_10.txt};

\addplot[
    blue!70,
    mark=triangle,
    mark size=1pt,
    mark options={solid},
]
table [x expr=\coordindex+1, y index=0] {Data/f2_J_2_K_10.txt};

\addplot[
    blue!70,
    mark=o,
    mark size=1pt,
    mark options={solid},
]
table [x expr=\coordindex+1, y index=0] {Data/f2_J_3_K_10.txt};

\addplot[
    blue!70,
    mark=x,
    mark size=1pt,
    mark options={solid},
]
table [x expr=\coordindex+1, y index=0] {Data/f2_J_4_K_10.txt};

\end{axis}

\draw[line width=0.5pt, line cap=round] (5,1.7) ellipse (.08cm and 0.2cm);
\node[below=0.2cm] at (5,1.7) {\shortstack{$J = 4$}};

\draw[line width=0.5pt, line cap=round] (2.8,1) ellipse (.08cm and 0.2cm);
\node[above=0.2cm] at (2.8,1) {\shortstack{$J = 3$}};

\draw[->, line width=0.7pt, line cap=round] (4,0.25) -- (4.5,0.62)
    node[pos=0, below=-2.9pt] {$J = 2$};

\draw[->, line width=0.7pt, line cap=round] (3,0.25) -- (3.5,0.42)
    node[pos=0, below=-2.9pt] {$J = 1$};    
\end{tikzpicture}}
\caption{Monotonic increase of $f_2$ with iterations for the proposed Algorithm~\ref{algo:1}.}

    \label{fig:4}
\end{figure}

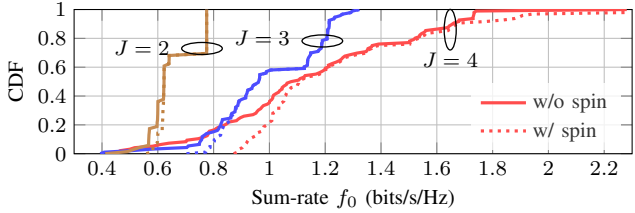
\begin{figure}
    \centering
    \begin{tikzpicture}
    \footnotesize{
\begin{axis}[
    width=9cm,
    height=3.5cm,
    xlabel={Sum-rate $f_0$ (bits/s/Hz)},
    ylabel={CDF},
    xmin =0.3, xmax = 2.3,
    ymin =0, ymax = 1,
    grid=major,
    legend style={at={(0.98,0.02)}, anchor=south east, draw=none, fill=white, fill opacity=0.8},
    legend cell align={left}
]

\addplot+[
      mark=none,
      very thick,
      red!70,
    ] 
    table [
      x expr=\thisrowno{0},
      y expr=\coordindex / (96),
      col sep=space,
    ] {Data/J_4_K_10_n.txt};

\addplot+[
      mark=none,
      very thick,
      dotted,
      red!70,
    ] 
    table [
      x expr=\thisrowno{1},
      y expr=\coordindex / (96),
      col sep=space,
    ] {Data/J_4_K_10_n.txt};

\addplot+[
      mark=none,
      very thick,
      blue!70,
    ] 
    table [
      x expr=\thisrowno{0},
      y expr=\coordindex / (93),
      col sep=space,
    ] {Data/J_3_K_10_n.txt};

\addplot+[
      mark=none,
      very thick,
      dotted,
      blue!70,
    ] 
    table [
      x expr=\thisrowno{1},
      y expr=\coordindex / (93),
      col sep=space,
    ] {Data/J_3_K_10_n.txt};

\addplot+[
      mark=none,
      very thick,
      brown!90,
    ] 
    table [
      x expr=\thisrowno{0},
      y expr=\coordindex / (69),
      col sep=space,
    ] {Data/J_2_K_10_n.txt};

\addplot+[
      mark=none,
      very thick,
      dotted,
      brown!100,
    ] 
    table [
      x expr=\thisrowno{1},
      y expr=\coordindex / (69),
      col sep=space,
    ] {Data/J_2_K_10_n.txt};

\addlegendimage{area legend, draw=none} 
\addlegendimage{line legend, thick, black, solid}
\addlegendentry{w/o spin}
\addlegendimage{line legend, thick, black, dashed}
\addlegendentry{w/ spin}

\end{axis}

\draw[line width=0.5pt, line cap=round] (5,1.65) ellipse (.08cm and 0.27cm);
\node[below=0.2cm] at (5,1.65) {\shortstack{$J = 4$}};

\draw[line width=0.5pt, line cap=round] (1.7,1.4) ellipse (.27cm and 0.08cm);
\node[below=0.3cm] at (.9,1.9) {\shortstack{$J = 2$}};

\draw[line width=0.5pt, line cap=round] (3.3,1.5) ellipse (.27cm and 0.08cm);
\node[above=0.35cm] at (2.5,1) {\shortstack{$J = 3$}};
}
\end{tikzpicture}
   \caption{CDF of the sum-rate $f_0$ with and without spin. }

    \label{fig:5}
\end{figure}

\section{Conclusions}  
\label{sec:5}
This paper presented a dynamic band allocation framework for LEO satellite systems in 6G networks. By enabling satellites and users to jointly select \cgls{dl} and \cgls{ul} frequency bands, the approach adds flexibility to conventional FDD systems and enhances spectrum utilization. A joint optimization problem was formulated for dynamic band selection, user scheduling, and power allocation, and solved through efficient transformation-based methods. Simulation results showed that the proposed strategy effectively mitigates interference in dense constellations, achieving up to 94\% performance improvement in high-interference scenarios.

\appendix
\subsection{Proof of \textup{\textbf{Proposition 1}}}
\label{appendix:1}
Observe that $f_2$ is strictly concave in  $\bm \xi^{\mathrm{dl}}$ and in $\bm \xi^{\mathrm{ul}}$. Therefore,  keeping other variables fixed, unique optimal point $\xi^{\mathrm{dl}\star}_{kj}$ and $\xi^{\mathrm{ul}\star}_{kj}$ is obtained from $\partial f_2 / \partial \xi^{\text{dl}}_{kj} = 0$ and $\partial f_2 / \partial \xi^{\text{ul}}_{kj} = 0$. Substituting this in $f_2$ recovers $f_1$ exactly. Similarly, $f_1$ is strictly concave in $\bm \chi^{\mathrm{dl}}$ and  $\bm \chi^{\mathrm{ul}}$. Therefore, unique optimal 
$\chi^{\mathrm{dl}\star}_{kj}$ and $\chi^{\mathrm{ul}\star}_{kj}$ can be obtained from $\partial f_1 / \partial \chi^{\text{dl}}_{kj} = 0$ and $\partial f_1 / \partial \chi^{\text{ul}}_{kj} = 0$. Substituting this in $f_1$ gives $f_0$ exactly.
\qed

\subsection{Proof of \textup{\textbf{Proposition 2}}}

\label{appendix:2}
We introduce a subscript $t$ for the iteration index, and we ignore the variable $\bm r$, as we search over all possible values. Therefore, for the objective $f_2$ the following holds:
\begin{align*}
 &f_2\Big(\bm d_{(t)},  \bm u_{(t)}, \bm p^{\mathrm{dl}}_{(t)}, \bm p^{\mathrm{ul}}_{(t)}, \,\bm \chi^{\mathrm{dl}}_{(t)}, \bm \chi^{\mathrm{ul}}_{(t)}, {\bm \xi}^{\mathrm{dl}}_{(t)}, {\bm \xi}^{\mathrm{ul}}_{(t)}\Big) \\
 &\overset{(a)}{=} f_1\Big(\bm d_{(t)},  \bm u_{(t)}, \bm p^{\mathrm{dl}}_{(t)}, \bm p^{\mathrm{ul}}_{(t)}, \,\bm \chi^{\mathrm{dl}}_{(t)}, \bm \chi^{\mathrm{ul}}_{(t)}\Big) \\
 &\overset{(b)}{<} f_1\Big(\bm d_{(t)},  \bm u_{(t)}, \bm p^{\mathrm{dl}}_{(t)}, \bm p^{\mathrm{ul}}_{(t)}, \,\bm \chi^{\mathrm{dl}}_{(t+1)}, \bm \chi^{\mathrm{ul}}_{(t+1)}\Big) \\
 &  \overset{(c)}{=}f_2\Big(\bm d_{(t)},  \bm u_{(t)}, \bm p^{\mathrm{dl}}_{(t)}, \bm p^{\mathrm{ul}}_{(t)}, \,\bm \chi^{\mathrm{dl}}_{(t+1)}, \bm \chi^{\mathrm{ul}}_{(t+1)}, \bm \xi^{\mathrm{dl}}_{(t+1)}, \bm \xi^{\mathrm{ul}}_{(t+1)}\Big) \\
 &\overset{(e)}{\leq}f_2\Big(\bm d_{(t+1)},  \bm u_{(t+1)}, \bm p^{\mathrm{dl}}_{(t+1)}, \bm p^{\mathrm{ul}}_{(t+1)}, \,\bm \chi^{\mathrm{dl}}_{(t+1)}, \bm \chi^{\mathrm{ul}}_{(t+1)}, \bm \xi^{\mathrm{dl}}_{(t+1)}, \\
 & \hspace{217pt} \bm \xi^{\mathrm{ul}}_{(t+1)}\Big)
\end{align*}
where $(a)$ follows from proof of Proposition~1, as keeping other variables fixed, $f_2$ is strictly concave in ${\bm \xi}^{\mathrm{dl}}$ and ${\bm \xi}^{\mathrm{ul}}$. Therefore, we can obtain ${\bm \xi}^{\mathrm{dl}}$ and ${\bm \xi}^{\mathrm{ul}}$ uniquely as in \eqref{eq:13} and substituting these in $f_2$ we get $f_1$ exactly. Further, $(b)$ follows as $f_1$ is strictly concave in ${\bm \chi}^{\mathrm{dl}}$ and ${\bm \chi}^{\mathrm{ul}}$; therefore, until a stationary point is obtained, the objective strictly increases with each update. Then, $(c)$ follows similar logic as in $(a)$. Finally, $(e)$ follows, as keeping other variables fixed, joint update of $(\bm d,  \bm u, \bm p^{\mathrm{dl}}, \bm p^{\mathrm{ul}})$ maximizes the objective.  Also, as the transmit powers are fixed, and the each update strictly increases the objective, the algorithm converges in finite steps.
\qed

\bibliographystyle{ieeetr}
\bibliography{ref}

\end{document}